\theoremstyle{plain}
\theoremstyle{plain}
\def\bea{\begin{eqnarray}}
\def\eea{\end{eqnarray}}
\def\ba{\begin{array}}
\def\ea{\end{array}}
\def\beq{\begin{equation}}
\def\eeq{\end{equation}}
\def\Tr{\operatorname{Tr}}
\def\({\left(}
\def\){\right)}
\def\[{\left[}
\def\]{\right]}
\newtheorem{theorem}{Theorem}
\newtheorem{corollary}{Corollary}
\begin{document}
\title{\textbf {Minimal-error quantum state discrimination versus robustness of entanglement:\\More indistinguishability with less entanglement}}
\author{Debarupa Saha}
\affiliation{Harish-Chandra Research Institute,  A CI of Homi Bhabha National Institute, Chhatnag Road, Jhunsi, Prayagraj  211019, India}
\author{Kornikar Sen}
\affiliation{Harish-Chandra Research Institute,  A CI of Homi Bhabha National Institute, Chhatnag Road, Jhunsi, Prayagraj  211019, India}
\author{Chirag Srivastava}
\affiliation{Laboratoire d'Information Quantique, CP 225, Université libre de Bruxelles (ULB), Av. F. D. Roosevelt 50, 1050 Bruxelles, Belgium}
\author{Ujjwal Sen}
\affiliation{Harish-Chandra Research Institute,  A CI of Homi Bhabha National Institute, Chhatnag Road, Jhunsi, Prayagraj  211019, India}
\begin{abstract}
%Quantum state discrimination plays a crucial role in quantum information and communication. %While robustness of a resource quantifies how tolerant it is to noise. %The exact importance of various resources in quantum state discrimination is still to be explored. 
We relate the the distinguishability of quantum states with their robustness of the entanglement, where the robustness of any resource quantifies how tolerant it is to noise. In particular, we identify upper and lower bounds on the probability of discriminating the states, appearing in an arbitrary multiparty ensemble, in terms of their robustness of entanglement and the probability of discriminating states of the closest separable ensemble. %These bounds hold for all measurement schemes, such as global or local discrimination. 
These bounds hold true, irrespective of the dimension of the constituent systems the number of parties involved, the size of the ensemble, and whether the measurement strategies are local or global. Additional lower bounds on the same quantity is determined by considering two special cases of two-state multiparty ensembles, either having equal entanglement or at least one of them being separable. The case of equal entanglement reveals that it is always easier to discriminate the entangled states than the ones in the corresponding closest separable ensemble, a phenomenon which we refer as ``More indistinguishability with less entanglement".
%the average probability of correctly discriminating the states in an ensemble is always greater than or equal to the probability of discriminating the separable states present in the closest separable ensemble. %This in turn establishes the notion of non- locality with less entanglement for such scenarios. 
Furthermore, we numerically explore how tight the bounds are by examining the global discrimination probability of states selected from Haar-uniformly generated ensembles of two two-qubit states. We find that for two-element ensembles of unequal entanglements, the minimum of the two entanglements
%corresponding to a given state in the ensemble 
must possess a threshold value for the ensemble to exhibit ``More indistinguishability with less entanglement".
\end{abstract}
\maketitle
\section{Introduction}  
The task of identifying a quantum state, chosen from a finite set of states with a certain probability, by performing measurements on the state is known as quantum state discrimination (QSD). QSD has various applications, such as, %dimension witnessing~\cite{Dim} %\textcolor{orange}{[the abstract says that they have found a connection between quantum state discrimination (QSD) and dimension witnessing, are you sure that they have used QSD for witnessing? please check]}
 channel discrimination ~\cite{CD1,CD2}, random access code~\cite{Rac1,Rac2}, dense coding~\cite{dc1,dc2,d3,d4,dc5,dc6} and metrology~\cite{MR1,MR0,MR2,MR3,MR4,MR5}. QSD therefore has technological as well as foundational importance.
 % It can even be used to differentiate between classical and quantum worlds~\cite{QF1,QF2}. Therefore QSD also has foundational importance.

The no-cloning theorem~\cite{I1} prohibits perfect discrimination of non-orthogonal states using quantum operations. Even if the state is selected from a set of mutually orthogonal states, the state may not be perfectly distinguishable if it is shared between multiple parties and only local operations and classical communication is allowed. State discrimination can broadly be categorized into two classes: minimum-error state discrimination (MESD)~\cite{Me0,Hels,Me1,Me2,Me3,Me4,Me5,Me6,Me7,Me8,Me10,Me9,Me11,Me12,Me13,Me14,Me15}  and unambiguous state discrimination (USD)~\cite{U2,U3,U1,U4,U5,U7,U6,U8,U9,U11,U12,U13,U14,U15,U10,U16,U17,U18}. In MESD, there is always a guess made for every measurement outcome, but there is a certain probability that the inference is incorrect. On the other hand, in USD, the state discrimination may not hand out a guess corresponding to an outcome, but when one is available, it is certainly correct. See~\cite{Re1,Re2,Re3,Re4,Re5,Re6,Re7,Re8,Re9} for reviews.

QSD has been experimentally realised, often using optical systems~\cite{E1,E2,E3,E5,E6,E4,E7}. %Most of the experimental settings exploit the light polarisation degrees of freedom of a single photon~\cite{E1,E2,E3,E4}.
In Ref.~\cite{E8}, the authors have employed quantum dots for multi-state discrimination. Some experimental setups deal with discrimination of non-orthogonal coherent states, as can be found in~\cite{E9,E10,E11,E12,E13}. Moreover, there are certain experiments which have used multiple copies of the state for better discrimination Refs.~\cite{Mc,Mc2,Mc4}.

%The main challenge of QSD is finding an optimal measurement strategy that will maximize the probability of correct guesses, in the case of MESD, or minimize the probability of receiving no information from the measurement outcomes, in the case of USD.

An independent challenge of MESD lies in finding an optimal measurement strategy that will maximize the probability of correct guessing. Though the optimal global measurements for discrimination of two arbitrary but fixed quantum states is known~\cite{Hels}, for more than two states no general form of the optimal measurement is available. See~\cite{Up1,Up2,Up3,Up4,Up5,Up6,Up7,Up8,Up9,Upb9,Up10,Upb11,Up11}, for various special cases.
%Various articles, like Refs.~\cite{Up1,Up2,Up3,Up4,Up5,Up6,Up7,Up8,Up9,Upb9,Up10,Upb11,Up11}, provided some results in this direction by considering numerous classes of ensembles.
Ref. ~\cite{UL} presented bounds on the optimal probability of correct identification of the states with local operations and classical communication. But this bound may not always be attainable.

%Benefits of different quantum resources in QSD tasks is not completely clear.Although, coherence measures have also been used to bound the optimal probabilities~\cite{Cb1, Cb2}. To the best of our knowledge, none of the articles yet provide any bounds to the probability of successfully discriminating states using MESD depending on the entanglement content of the states. In this article, we try to explore how entanglement affects the guessing probabilities in MESD protocol. We find that the optimal probability of discriminating states chosen from any ensemble is upper and lower bounded by certain functions of random robustness of entanglement~\cite{Ref1} of the states present in the ensemble and optimal probability of discriminating the states of an ensemble containing particular separable states. These particular separable states are formed by mixing a minimum amount of maximally mixed states with the entangled states of the primary ensemble. The ensemble of such separable states are referred here as the \emph{closest} separable ensemble of the primary ensemble. Furthermore, we provide examples of ensembles where the lower bound is reachable. These results are motivated by Refs~\cite{Co} which has drawn a bridge between state discrimination using a set of measurements and the robustness of incompatibility of that measurement set. 
The role of different quantum resources in QSD tasks is as yet unclear. In~\cite{Co}, the authors related the problem of QSD to ``robustness’’ of the measurements involved in the distinguishing. A relation of QSD with quantum coherence was found in~\cite{Cb1, Cb2}. Here we show that the optimal probability of inferring a state from an arbitrary multisite ensemble in an arbitrary MESD protocol is related to the entanglement content of the elements of the ensemble. More specifically,  we introduce a notion of the closest separable ensemble for an arbitrary multiparty ensemble, and show that the ratio of the optimal guessing probability in an arbitrary MESD protocol for an arbitrary multiparty ensemble to the same for its closest separable ensemble is upper bounded by a measure of entanglement content of the constituent states of the ensemble. A different measure of entanglement of the states is found to lower bound a related quantity. Both the measures are functions related to the random robustness of entanglement ~\cite{Ref1}. 

We subsequently consider the case of global discrimination of two-element multiparty ensembles. We find an independent lower bound on the probability of distinguishing the states in two special instances, viz. the case when the two states of the ensemble have the same entanglement, and the case when one of them is separable. The entanglement is quantified by the random robustness of entanglement ~\cite{Ref1}. We find that corresponding to any ensemble of two equally entangled states, there exists a two-element ensemble of separable states, created by mixing a minimum amount of the maximally mixed state with the entangled states, with the separable pair being harder to distinguish than the entangled one. We refer to this phenomenon as ``More indistinguishability with less entanglement". Notably, a similar phenomenon does not always occur for two-element ensembles of one entangled and one separable state. In the later part of the article,
 we numerically examine the derived bounds, considering ensembles of Haar-uniformly generated two-qubit states of arbitrary but fixed ranks. We find that for two-element ensembles of unequal entanglements, the minimum of the two entanglements must possess a threshold value for the ensemble to exhibit ``More indistinguishability with less entanglement".
 
%We also, obtained similar results in our study but our results applies to arbitrary measurement schemes, and holds true for arbitrary ensemble.

%Robustness of entanglement~\cite{Ref1} measures how fragile an arbitrary bipartite or multiparty state is against mixing with separable states of equivalent dimension. Physically speaking, robustness of entanglement of a state determines the  minimum probability of mixing a given separable state with it.

The remaining parts of the article  is organized as follows. In Sec.~\ref{2}, we present definitions and properties of some well-known quantities that will be needed later in the paper. In Sec.~\ref{iii}, we provide upper and lower bounds on the success probability of discriminating quantum states chosen from any fixed ensemble. Moreover, we present an example where the lower bound can be reached. Considering ensembles of two multipartite-states, utilizing global measurement schemes, and two special situations of random robustness of entanglement of the states, in Sec.~\ref{4}, we find independent lower bounds on the success probability of QSD. Finally, in the same section, we take ensembles consisting of two Haar-uniformly generated two-qubit states and compare the actual optimal probability of successfully discriminating the states of the ensemble with the given upper bound. In the same section we numerically look for the value of minimum entanglement require for the activation of the phenomenon of ``More indistinguishibality with less entanglement" focusing on two-element ensembles of pure states. We provide our concluding remarks in Sec.~\ref{s5}.

\section{Preliminaries}
\label{2}
In this section, we provide a brief description of the requisite quantities, like robustness of $n$-entanglement, random robustness of $n$-entanglement, probability of success in MESD tasks.%\textcolor{red}{, etc.}
\subsection{Robustness of entanglement}
%\textcolor{red}{As mentioned earlier, throughout the paper the quantity robustness we will refer to is robustness of entanglement.} 

Suppose we have an $m$-partite composite system, $S$. The state of $S$ can be described by a density matrix, $\rho$, which acts on the Hilbert space $\mathcal{H}=\mathcal{H}_1\otimes \mathcal{H}_2\otimes \cdot\cdot\cdot\mathcal{H}_m$, where $\mathcal{H}_i$ describes $i$th part of the entire system, $S$. Let the dimension of $\mathcal{H}$ be $d$. A pure state, $\ket{\psi}\in\mathcal{H}$, is called $k$-separable if it can be written as a product of $k$ pure states, i.e., $\ket{\psi}=\prod_{i=1}^k \ket{\phi_i}$, for any $\{\ket{\phi_i}\}_{i=1}^k$. Clearly $k\leq m$. A density operator acting on $\mathcal{H}$, is referred to as $k$-separable if it can be expressed as a convex mixture of $k$-separable pure states.  For example, if $k=2$, the state is called biseparable, if $k=3$, the state is triseparable, or it is fully separable, if $k=m$. Let us denote the set of all $k$-separable states which acts on $\mathcal{H}$ as $\mathcal M_{k}$. A state which is $k$-separable is also $k'$-separable for all $k'\leq k$.

If an $m$-partite state, $\rho$, can not be written as a convex sum of $k$-separable states, we call it $n$-entangled where $n=m-k+1$. Similar to $k$-separability, the state, $\rho$, which is $n$-entangled must be $n'$ entangled for all $n'\leq n$. For simplicity of notations, from now on, we will always use $n$ to denote $m-k+1$.

 Robustness of $n$-entanglement~\cite{Ref1}, $R_n(\rho||\sigma_{k})$, of  $\rho$ with respect to any $\sigma
_k \in \mathcal{M}_{k}$, is defined as the minimum amount of $\sigma
_k$ needed to be mixed with $\rho$ to make the resulting state $k$-separable. 
%The mathematical expression of the robustness of $n$-entanglement of $\rho$ with respect to  $\sigma_k$ is thus given by 
Formally, it is defined as
\begin{eqnarray*}
    R_n(\rho||\sigma_{k})=\min r, \\
    \text{such that }\frac{\rho+r\sigma_k}{1+r}\in \mathcal{M}_{k},
\end{eqnarray*}
where $r\geq 0$ and can also be infinite.
Therefore we see, corresponding to every $m$-partite state, $\rho$, there exists a $k$-separable state $\sigma^t$ such that $\sigma^t=\frac{\rho+R_n(\rho||\sigma_{k})\sigma_k}{1+R_n(\rho||\sigma_{k})}$. We name $\sigma^t$ the $k$-separable twin of $\rho$. %\textcolor{orange}{[ami r kono naam vebe pelam na, tumi onyo kono nam dite paro, eta pore Remark 1. e legeche.]}.

%Another robustness kind of measure of $n$-entanglement is defined as follows:
Another interesting quantity, known as the random robustness of $n$-entanglement, is defined as follows:
\begin{eqnarray}
    {R}_n(\rho||\frac{I_d}{d})=\min r \nonumber \\
    \text{such that }\frac{\rho+r I_{d}/d}{1+r}\in \mathcal{M}_{k},\label{eq4}
\end{eqnarray}
where $I_d$ is the identity operator acting on the composite Hilbert space $\mathcal{H}$. That is, random robustness of $n$-entanglement is a special kind of robustness of $n$-entanglement defined for any state with respect to a maximally mixed state. %${R}_n(\rho||I_d)$ is known as random robustness of $n$-entanglement. Since $I_d/d\in\mathcal{M}_k$, $R_n(\rho||\sigma_{k})\leq R_n(\rho||I_d)$. 
Let $\boldsymbol{R}$ be the random robustness of $n$-entanglement of a state $\rho$. This implies $\sigma=\frac{\rho+\boldsymbol{R} I_{d}/d}{1+\boldsymbol{R}}$ is the closest $k$-separable state from the state $\rho$ along the line joining the maximally mixed state and the state $\rho$. We refer such separable states as the \emph{closest} $k$-separable state to $\rho$. %By ``closest" we mean the closest state among the set of separable states which lie on the line joining the maximally mixed state and the state $\rho$.
Note, that $\boldsymbol{R} $ is always a finite quantity since maximally mixed state lies at the interior of the $k$-separable ball ~\cite{SB1,SB2,SB3} of non-zero volume. Therefore, mixing identity by a finite amount is enough to trail $\rho$  into the $k$- separable region. 

Now consider an ensemble, $\eta_m:=\{p_{b},\rho_{b}\}_b$, of $m$-partite states, $\rho_b$, where $p_{b}$ is the probability of finding the state $\rho_{b}$ in the ensemble. Each of the states, $\{\rho_b\}_b$, acts on the same Hilbert space, $\mathcal{H}$, of dimension $d$.
Random robustness of $n$-entanglement of a state, $\rho_b$, of the ensemble, $\eta_m$, is $R_n(\rho_{b}||\frac{I_{d}}{d})$. 
The maximum random robustness of $n$-entanglement of the ensemble, $\eta_m$, is defined as
\begin{equation*}
\mathbb{R}_{n}^M(\eta_m)=\max\limits_{\rho_{b}\in \eta_m}R_n(\rho_{b}||\frac{I_{d}}{d}).
\end{equation*}
Similarly, the minimum random robustness of $n$-entanglement of $\eta_m$ is
\begin{equation*}
\mathbb{R}_{n}^m(\eta_m)=\min\limits_{\rho_{b}\in\eta_m}R_n(\rho_{b}||\frac{I_{d}}{d}).
\end{equation*}

Corresponding to each state $\rho_b$ of the ensemble $\eta_m$ of $m$-partite states, there exists a unique $k$-separable state, $\sigma_b$, such that 
\begin{equation}
  \sigma_b=\frac{\rho_b+R_n(\rho_{b}||\frac{I_{d}}{d}) \frac{I_{d}}{d}}{1+R_n(\rho_{b}||\frac{I_{d}}{d})}.  \nonumber
\end{equation}
 Hence for any ensemble, $\eta_m=\{p_{b},\rho_{b}\}$, we can always define another ensemble, $\epsilon_{\eta_m}=\{p_b,\sigma_b\}$, which would contain the $k$-separable states, $\sigma_b$. We refer to $\epsilon_{\eta_m}$ as the closest ensemble of $k$-separable states to $\eta_m$. 

Since bipartite states consist of only one type of entanglement, i.e., the bipartite entanglement, we will call random robustness of 1-entanglement of bipartite states as simply random robustness of entanglement or RRE. Similarly, the maximum or minimum random robustness of $1$-entanglement of bipartite ensembles, $\eta_2$, will be referred to as the maximum and minimum RRE of $\eta_2$, respectively, and denoted as $\mathbb{R}_1^M(\eta_2)$ and $\mathbb{R}_1^m(\eta_2)$.

\label{secA}

\subsection{Minimum error state discrimination}
Suppose two players, Alice and Bob, play a game. Alice randomly picks an $m$-partite state, say $\rho_b$, from an ensemble, $\eta_m$, and sends the entire state to Bob. Bob knows about the ensemble, $\eta_m$, from which the state has been chosen but does not have any information about the particular chosen state $\rho_b$. Bob's task is to determine which state has been given to Bob. In particular, Bob wants to identify the index, $b$. If he correctly uncovers $b$, he wins. The only way in which Bob can distinguish the states is by performing measurements. For this he may use a positive-operator-valued-measure (POVM), $\{M_{i}\geq 0\}_i$ and $\sum_i M_i = I_d$. If this POVM is measured on the received state, $\rho_b$, the probability that $M_i$ will get clicked is $$q(i|b)=\Tr[\rho_{b}M_{i}].$$
Bob's strategy is when an element, $M_{i}$, of the POVM appears as the measurement outcome Bob guesses the state to be $\rho_i$. Thus for a successful guess, we need the index $i$ to be same as the index $b$ %$q_{i|b}=\delta_{bi}$ for all $i$ and $b$. But this may not be possible for any ensemble, $\eta_m$. 
The average probability of winning, that is, successfully guessing the correct state, averaged over the entire ensemble, is thus given by,
\begin{equation}
\mathcal{P}=\sum_{b,i}p_{b}\delta_{i,b}\Tr[\rho_{b}M_{i}].\nonumber
\end{equation}
If Bob wants to maximize his chances of winning, he has to find the best POVM that optimizes $\mathcal{P}$. Hence the optimal success probability is given as
\begin{equation}
P_G(\eta_m)=\max\limits_{\{M_{i}\}\in \text{POVM}}\sum_{b,i}p_{b}\delta_{i,b}\Tr[\rho_{b}M_{i}].\label{eq1}
\end{equation}
The game is known as global quantum state discrimination and $P_G(\eta_m)$ is the probability of success.

It is possible to complicate the game a bit further by including another player, say Charu. Alice can again randomly select a state, $\rho_b$, from $\eta_m$, and send some parts of it to Bob and the remaining parts to Charu. Bob and Charu's goal would be to determine exactly which state Alice has sent to them. Depending on the distance between Bob and Charu and their accessibility to communication channels, Bob and Charu can choose a desired scheme for discrimination. For instance, if Bob and Charu both are present in the same location, they can just come together with their respective parts of the system and make global measurements on the whole system, to guess the state. In this case, the maximum probability of guessing can be found using Eq.~\eqref{eq1}, where $\{M_i\}$ are the global POVM performed on the received state, $\rho_b$.  This is again a global quantum state discrimination.

On the other hand, if Bob and Charu live far away and communication expenses are high, the only option left to them is to make local operations on their respective parts without using any classical communication and try to communicate once they already have the outcomes of the measurements to finally make their guess. Then the discrimination scheme is only dependent on local operations (LO). The corresponding optimal probability of success can be written as
\begin{eqnarray}
P_{LO}(\eta_m)=\max\limits_{\{M_{i}\},\{N_j\}\in \text{POVM}}\sum_{i,j,b}p_{b}\delta_{ij,b}\\
\Tr[M_{i}\otimes N_{j}\rho_{b}].\label{eq2}
\end{eqnarray}

If Bob and Charu are wealthy enough to classically communicate in between their measurements, one can cooperatively choose measurements depending on the other's measurement result. This measurement scheme is popularly known as local operations and classical communication (LOCC). In this case, the optimal success probability is
\begin{equation}
P_{LOCC}(\eta_m)=\max\limits_{\{M_{i}\}\in \text{LOCC}}\sum_{i,b}p_{b}\delta_{i,b}\Tr[M_{i}\rho_{b}].\label{eq3}
\end{equation}
    Note, that the the equations ~\ref{eq2},~\ref{eq3} holds true even if multiple parties are playing the game. In a sense, that each of the parties can make local operation (LO) on his/her share of the states to identify it. Or, they can also communicate with each other after making local measurements i.e.(LOCC). In both the scenarios one may define the success probability in a similar fashion as described above, for bipartite case.  
\subsection{Global discrimination between two states}
Let $\eta_m^2=\{p_b,\rho_b\}$ is any arbitrary ensemble of two $m$-partite states, $\rho_1$ and $\rho_2$, i.e., $b$ takes only two different values, 1 and 2. Here superscript 2 implies the ensemble consists of only two distinct states. The optimal probability, $P_G(\eta_m^2)$, of correctly discriminating these two states using global operations is given by the Heltstorm bound, i.e.,~\cite{Ref1} 
\begin{equation}
    P_G\left(\eta_m^2\right)=\frac{1+||p_{1}\rho_1-p_{2}\rho_2||_{1}}{2},\nonumber
\end{equation}
where $||A||_1=\Tr\left(\sqrt{A^\dagger A}\right)$ is the trace norm of the matrix $A$. We would like to mention here some properties of the trace-norm which will be used in the remaining part of the paper:
\begin{enumerate}
    \item 
    \label{p1}
    $||A||_1\geq 0$ for any operator $A$.
    \item 
    \label{p2}
    $||cA||_1 = |c|~||A||_1>0$ for any operator $A$ and complex number~$c$.
    \item 
    \label{p3}
    $||A+B||_1\leq ||A||_1+||B||_1$ for any pair of operators, $A$ and $B$. This property can also be written in the following form
    $||A-B||_1\geq |(||A||-||B||)|$.
\end{enumerate}

\section{Bounds on probability of success}
\label{iii}
In this section, we try to find a relation between the probability of successfully discriminating states of an ensemble and the maximum robustness of $n$-entanglement of the ensemble. From now on we will mostly focus on bipartite ensembles, $\eta_2=\{p_b,\rho_b\}$. %, and random robustness of 1-entanglement of the states $\rho_b$. Since bipartite states consist of only one type of entanglement, i.e., the bipartite entanglement, we will call random robustness of 1-entanglement of bipartite states as simply random robustness of entanglement or RRE. Let RRE of the state $\rho_b$ be $\boldsymbol{R}_1=R_1(\rho_b||I_d)$, where $d$ is the dimension of the Hilbert space, say $\mathcal{H}_1\otimes\mathcal{H}_2$, on which $\rho_b$ acts and $I_d$ is the identity operator that also acts on $\mathcal{H}_1\otimes\mathcal{H}_2$. Hence we have a separable state, $\sigma_b$, corresponding to every state, $\rho_b$, of the ensemble, $\eta_2$, of the following form:
%\begin{equation}
%\label{eq5}
%\sigma_{b}=\frac{\rho_{b}+\boldsymbol{R}_1I_d/{d}}{1+\boldsymbol{R}_1}.
%\end{equation}
%We can define a new ensemble with these separable states, $\epsilon_{\eta_2}:=\{p_b,\sigma_b\}$. This is the closest ensemble of separable states to $\eta_2$.
%Suppose, Alice and Bob are playing a game. Where, they both have prior knowledge about a given ensemble of set, say $\epsilon$. As, the game begins, a person called Charu  picks up a state at random from the ensemble with a certain probability and sends one part of the state to Alice and the other to Bob. The ensemble is, $\epsilon:=\{p_{b},\rho_{b}\}$. Where $p_{b}$ is the probability of picking a particular state $\rho_{b}$ from the set. 
%Suppose, each of the states is mixed with $\frac{I}{d}$, such that the resulting mixed state becomes separable. Then the corresponding ensemble consisting of the separable states can be denoted as  $\epsilon_{s}:=\{p_{b},\rho_{b}^{s}\}$, with
%\begin{equation}
%\rho_{b}^{s}=\frac{\rho_{b}+\frac{R(\rho_{b}||I_{d})I}{d}}{1+R(\rho_{b}||I_{d})}
%\end{equation}
%$d$ is the total dimension of the states, and $R(\rho_{b}||I_{d})$ is the random robustness discussed in section \ref{secA}.
In the following subsection, we formulate two theorems that bound the success probability of guessing a state, $\rho_b$, chosen from the ensemble $\eta_2$ with RRE of $\rho_b$ and the probability of guessing states that are selected from the closest ensemble of fully separable states, $\epsilon_{\eta_2}$.
\subsection{Upper bound}
\label{3a}
 Let us present an upper bound on the probability of correctly guessing a state picked up from a bipartite ensemble.
\begin{theorem}
The probability, $P_X$, of discriminating the states chosen from any fixed bipartite ensemble $\eta_2:=\{p_{b},\rho_{b}\}$, using any particular discrimination scheme, $X$, is upper bounded as
\begin{equation}
P_X(\eta_2)\leq P_{X}(\epsilon_{\eta_2})\left[1+\mathbb{R}^M_1(\eta_2)\right],
\label{14}
\end{equation}
where  $\epsilon_{\eta_2}$ is the closest ensemble of fully separable states to ${\eta_2}$ and $\mathbb{R}^M_1(\eta_2)$ is the maximum RRE of the ensemble $\eta_2$.
\begin{proof}
Let RRE of the state $\rho_b$ be $\boldsymbol{R}_b=R_1(\rho_b||\frac{I_d}{d})$, where $d$ is the dimension of the Hilbert space, say $\mathcal{H}_1\otimes\mathcal{H}_2$, on which $\rho_b$ acts and $I_d$ is the identity operator that also acts on $\mathcal{H}_1\otimes\mathcal{H}_2$. Hence the separable state, say $\sigma_b$, of the ensemble $\epsilon_{\eta_2}=\{p_b,\sigma_b\}$, is given by
\begin{equation}
\label{eq5}
\sigma_{b}=\frac{\rho_{b}+\boldsymbol{R}_bI_d/{d}}{1+\boldsymbol{R}_b}.
\end{equation}
Since, $I_{d}$ is a positive operator, one can easily show using Eq.~\eqref{eq5} that for every state, $\rho_{b}$, the following inequality holds:
\begin{equation}
\rho_{b}\leq[1+\boldsymbol{R}_b]\sigma_b.\nonumber
\end{equation}
By multiplying both sides with POVM operators, $M_{b}$, and $p_b$, taking trace on both sides,  and then summing over $b$, we get
\begin{equation}
\sum\limits_{b}p_{b}\Tr[M_{b}\rho_{b}]\leq\sum\limits_{b}[1+\boldsymbol{R}_b]p_{b}\Tr[M_{b}\sigma_{b}].
\label{5}
\end{equation}
%Since $\{M_b\}_b$ are elements of an arbitrary POVM, we have $\sum\limits_{b} M_{b}=I_{d} $.
Employing optimization over a set of POVMs, say $\boldsymbol{X}$, on both sides, we get 
\begin{eqnarray*}
\max\limits_{M_{b}\in\boldsymbol{X}}\sum\limits_{b}p_{b}\Tr[M_{b}\rho_{b}]\leq \max\limits_{M_{b}\in\boldsymbol{X}}\sum\limits_{b}[1+\boldsymbol{R}_b]p_{b}\Tr[M_{b}\sigma_{b}].
\end{eqnarray*}
The above inequality can also be modified as
\begin{eqnarray*}
\max\limits_{M_{b}\in\boldsymbol{X}}\sum\limits_{b}p_{b}\Tr[M_{b}\rho_{b}]\leq \left[1+\mathbb{R}_1^M(\eta_2)\right] \\\max\limits_{M_{b}\in\boldsymbol{X}}\sum\limits_{b}p_{b}\Tr[M_{b}\sigma_{b}],
\end{eqnarray*}
where $\mathbb{R}_1^M(\eta_2)$ is the maximum RRE of $\eta_2$ as defined in Sec.~\ref{secA}. The left-hand side of the above inequality is nothing but the probability, $P_X(\eta_2)$, of winning the state discrimination game with ensemble $\eta_2$, while the term on the right-hand side being multiplied with $\left[1+\mathbb{R}_1^M(\eta_2)\right]$ is the winning probability, $P_X(\epsilon_{\eta_2})$, when the ensemble is $\epsilon_{\eta_2}$. Thus we have
\begin{equation}
    P_X(\eta_2)\leq P_X(\epsilon_{\eta_2})\left[1+\mathbb{R}^M_1(\eta_2)\right].\nonumber
\end{equation}
Here $X$ denotes the measurement scheme which specifies the set, $\boldsymbol{X}$, over which the optimization has been done.
\end{proof}
\label{l1}
\end{theorem}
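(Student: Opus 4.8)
The plan is to convert the defining relation of random robustness of entanglement into a single operator inequality and then propagate it through the measurement statistics. First I would recall that, by Eq.~\eqref{eq4}, the separable state attached to $\rho_b$ satisfies $\sigma_b = (\rho_b + \boldsymbol{R}_b I_d/d)/(1+\boldsymbol{R}_b)$ with $\boldsymbol{R}_b = R_1(\rho_b||\frac{I_d}{d})$. Rearranging gives $\rho_b = (1+\boldsymbol{R}_b)\sigma_b - \boldsymbol{R}_b I_d/d$, and since $I_d/d$ is positive semidefinite and $\boldsymbol{R}_b\geq 0$, this immediately yields the operator inequality $\rho_b \leq (1+\boldsymbol{R}_b)\sigma_b$. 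This is the crux of the argument: as matrices, each genuine state is dominated by its separable twin rescaled by one plus its robustness.

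Next I would feed this into the success probability. For any POVM effect $M_b\geq 0$, positivity preserves the order under $\Tr[M_b\,\cdot\,]$, so $\Tr[M_b\rho_b]\leq (1+\boldsymbol{R}_b)\Tr[M_b\sigma_b]$. Weighting by $p_b$ and summing over $b$ gives $\sum_b p_b\Tr[M_b\rho_b]\leq \sum_b(1+\boldsymbol{R}_b)p_b\Tr[M_b\sigma_b]$. To eliminate the state-dependent constants I would replace each $\boldsymbol{R}_b$ by the ensemble maximum $\mathbb{R}_1^M(\eta_2)$; since every term on the right is nonnegative this only enlarges the right-hand side, producing $\sum_b p_b\Tr[M_b\rho_b]\leq[1+\mathbb{R}_1^M(\eta_2)]\sum_b p_b\Tr[M_b\sigma_b]$.

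Finally I would optimize over the measurement class $X$. Because the inequality just derived holds pointwise for every admissible POVM $\{M_b\}\in\boldsymbol{X}$ and the prefactor $[1+\mathbb{R}_1^M(\eta_2)]$ is a fixed scalar independent of the measurement, taking the supremum over $\boldsymbol{X}$ on both sides preserves it by monotonicity of the maximum. The left-hand maximum is by definition $P_X(\eta_2)$, while on the right the scalar factors out of the maximum, leaving $[1+\mathbb{R}_1^M(\eta_2)]P_X(\epsilon_{\eta_2})$, which is the claimed bound. The step I expect to need the most care is precisely this optimization: one must notice that the argument never requires the two ensembles to be optimized by the \emph{same} measurement. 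Both $\eta_2$ and $\epsilon_{\eta_2}$ act on the same Hilbert space $\mathcal{H}_1\otimes\mathcal{H}_2$, so the very same family $\boldsymbol{X}$ (global, LO, or LOCC) is available on both sides, and monotonicity of the maximum applied to a pointwise inequality does the rest. The only thing to guard against is an inadvertent mismatch of schemes, ensuring that the separable success probability appearing on the right is genuinely $P_X(\epsilon_{\eta_2})$ for the same scheme $X$.
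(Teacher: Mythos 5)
Your proposal is correct and follows essentially the same route as the paper's own proof: the operator inequality $\rho_b\leq(1+\boldsymbol{R}_b)\sigma_b$ obtained from the definition of the closest separable state, propagation through $\Tr[M_b\,\cdot\,]$, replacement of $\boldsymbol{R}_b$ by the ensemble maximum, and maximization over the measurement class. The only (immaterial) difference is that you pull out the maximum robustness before optimizing over POVMs, whereas the paper optimizes first and then factors out the constant.
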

As we just mentioned, the superscript, $X$, denotes the type of scheme undertaken by the parties to discriminate the states. For example, global measurements, local operations, and local operations and classical communication can be represented by $X=G$, $X=LO$, and $X=LOCC$, respectively. Therefore considering $\boldsymbol{X}$ to be the set of all measurements which can be implemented locally with or without doing any classical communication we get the following Corollary:
\begin{corollary}
The optimal probability, $P_{LO/LOCC}(\eta_2)$, of discriminating bipartite states chosen from an ensemble, $\eta_2$, using local operation without or with classical communication is upper bounded as
$$P_{LO/LOCC}(\eta_2)\leq P_{LO/LOCC}(\epsilon_{\eta_2})\left[1+\mathbb{R}^M_1(\eta_{2})\right],$$
where $P_{LO/LOCC}(\epsilon_{\eta_2})$ is the optimal probability of successfully discriminating the states selected from the closest ensemble $\epsilon_{\eta_2}$ of fully separable state to $\eta_2$ using LO or LOCC and $\mathbb{R}^M_1(\eta_2)$ is the maximum RRE of $\eta_2$.
\label{c1}
\end{corollary}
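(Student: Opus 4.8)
The plan is to obtain the Corollary as a direct specialization of the preceding Theorem, whose proof was phrased for an \emph{arbitrary} set $\boldsymbol{X}$ of POVMs over which the optimization defining $P_X$ is carried out. The only place where the states entered that proof was through the pointwise operator inequality $\rho_{b}\leq[1+\boldsymbol{R}_b]\sigma_b$, obtained from Eq.~\eqref{eq5} together with the positivity of $I_d$; crucially, this inequality holds independently of any measurement. Everything therefore reduces to instantiating $\boldsymbol{X}$ as a class of locally implementable measurements.

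First I would take $\boldsymbol{X}$ to be the set of product POVMs (for the LO case) or the set of LOCC-implementable POVMs (for the LOCC case). Each of these is a subset of the collection of all POVMs acting on $\mathcal{H}_1\otimes\mathcal{H}_2$, so it is an admissible choice of $\boldsymbol{X}$, and the chain of inequalities in the Theorem's proof applies verbatim: multiplying $\rho_{b}\leq[1+\boldsymbol{R}_b]\sigma_b$ by $p_b$ and by the POVM element that signals the guess $b$, taking the trace, summing over $b$, bounding each $\boldsymbol{R}_b$ by $\mathbb{R}^M_1(\eta_2)$, and finally optimizing both sides over the chosen restricted set, I obtain
\begin{equation}
P_{LO/LOCC}(\eta_2)\leq P_{LO/LOCC}(\epsilon_{\eta_2})\left[1+\mathbb{R}^M_1(\eta_2)\right].\nonumber
\end{equation}

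The single point I would check carefully is that both sides are optimized over the \emph{same} measurement class: the left-hand side must be read as the optimal guessing probability $P_{LO/LOCC}(\eta_2)$ for $\eta_2$ restricted to LO (resp.\ LOCC), while the factor multiplying $[1+\mathbb{R}^M_1(\eta_2)]$ on the right must be the corresponding $P_{LO/LOCC}(\epsilon_{\eta_2})$ for the closest separable ensemble under the \emph{identical} restriction. This matching is automatic, since the restricted maximization $\max_{M_b\in\boldsymbol{X}}$ is applied simultaneously to $\sum_b p_b\Tr[M_b\rho_b]$ and to $\sum_b p_b\Tr[M_b\sigma_b]$, so the same $\boldsymbol{X}$ indexes both discrimination problems. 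Because no step of the argument distinguishes entangled from separable ensembles, I do not expect a genuine obstacle here; the entire content is inherited from the Theorem, and the role of the Corollary is simply to record the two physically relevant specializations.
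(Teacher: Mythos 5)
Your proposal is correct and follows exactly the paper's route: the paper obtains Corollary~\ref{c1} by simply instantiating the set $\boldsymbol{X}$ in Theorem~\ref{l1} as the class of locally implementable (LO or LOCC) POVMs, noting that the proof of the Theorem never used any property of $\boldsymbol{X}$ beyond it being a set of POVMs over which both sides are simultaneously optimized. Your extra check that the same measurement class indexes both discrimination problems is the right point to verify and is implicit in the paper's argument.
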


%The above bound is true for any arbitrary discrimination scheme .i.e. this holds  for any measurement protocol , whether the measurement is global(G) of the form $M_{b}=M_{b}^{G}$.
%or it is just local operations(LO) on individual parties with a form $M_{b}=M_{b}^{A}\bigotimes M_{b}^{B}$. Here A and B in the suffix represents the party A and B. The lemma also holds true for local operation operation assisted with classical communication(LOCC). In this case, the operators will be of the form $M_{b}=M_{b}^{A}\bigotimes M_{b}^{B|A}$, for one way LOCC. $M_{b}^{B|A}$ is the operator chosen by bob given A chooses some operator $M_{b}^{A}$.
It is important to note here that none of the Theorem 1 or Corollary 1 depends on the dimension of the bipartite states or the size of the ensemble, $\eta_2$. Though Theorem 1 is stated and proved for bipartite ensembles and the bound depends on RRE of the states, the Theorem can be generalized to any $m$-partite ensemble, $\eta_m=\{p_b,\rho_b\}$. As one can notice from Eq.~\eqref{eq4}, the definition of random robustness of $n$-entanglement involves mixing maximally mixed state with the state, $\rho_b$, of which the entanglement is being quantified to make the resulting state, $\sigma_b$, $k$-separable, where $k=m-n+1$. Therefore corresponding to every $\eta_m$, we can define an ensemble, $\epsilon_{\eta_m}=\{p_b,\sigma_b\}$, of $k$-separable states, that are closest to the states of $\eta_m$. If we generalize the bound on probability of successfully discriminating the states of ensemble, $\eta_m$, for any $m$-partite ensemble, that bound will involve maximum random robustness of $n$-entanglmenet of the ensemble, $\eta_m$, and probability of correctly discriminating the states of the ensemble $\epsilon_{\eta_m}$. Let us state the bound through the following corollary:
\begin{corollary}
The probability, $P_{X}(\eta_m)$, of discriminating the $m-$partite states chosen from an ensemble, $\eta_m$, is upper bounded as
$$P_{X}(\eta_m)\leq P_{X}(\epsilon_{\eta_m})\left[1+\mathbb{R}^M_n(\eta_m)\right],$$
where $\epsilon_{\eta_m}$ is the closest ensemble to $\eta_m$ of $m-n+1$-separable states, ${\sigma_b}$,  and $\mathbb{R}^M_n(\eta_m)$ is the maximum random robustness of $n$-entanglement of the ensemble $\eta_m$.
\label{c2}
\end{corollary}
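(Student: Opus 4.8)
The plan is to reproduce, step for step, the proof of Theorem~\ref{l1}, observing that not a single line of that argument used the restriction to two parties or to bipartite (fully separable) decompositions. The only ingredients were the defining relation for the closest separable twin, the positivity of the identity, the monotonicity of the trace against a POVM element, and the uniform bound of each individual robustness by the ensemble maximum. Each of these carries over verbatim once ``bipartite / fully separable / RRE'' is replaced by ``$m$-partite / $k$-separable / random robustness of $n$-entanglement,'' with $k=m-n+1$.

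Concretely, I would first set $\boldsymbol{R}_b=R_n(\rho_b||\frac{I_d}{d})$ for each $\rho_b\in\eta_m$, so that, by the definition in Eq.~\eqref{eq4}, the corresponding element of $\epsilon_{\eta_m}$ is
\begin{equation}
\sigma_b=\frac{\rho_b+\boldsymbol{R}_b I_d/d}{1+\boldsymbol{R}_b}.\nonumber
\end{equation}
Since $I_d/d\geq 0$ and $\boldsymbol{R}_b\geq 0$, rearranging gives the operator inequality $\rho_b\leq(1+\boldsymbol{R}_b)\sigma_b$; this is the $m$-partite analogue of the key step in Theorem~\ref{l1} and holds for exactly the same reason, namely that subtracting a positive multiple of $I_d/d$ can only decrease the operator.

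Next I would multiply both sides by $p_b M_b$, with $M_b\geq 0$ a POVM element appropriate to the scheme $X$, take the trace (which preserves the inequality because $M_b\geq 0$), and sum over $b$ to obtain
\begin{equation}
\sum_b p_b\Tr[M_b\rho_b]\leq\sum_b(1+\boldsymbol{R}_b)p_b\Tr[M_b\sigma_b].\nonumber
\end{equation}
Bounding $1+\boldsymbol{R}_b\leq 1+\mathbb{R}_n^M(\eta_m)$ uniformly, pulling the constant out of the sum, and finally optimizing both sides over the POVM set $\boldsymbol{X}$ that defines $X$ turns the left-hand side into $P_X(\eta_m)$ and the right-hand side into $[1+\mathbb{R}_n^M(\eta_m)]\,P_X(\epsilon_{\eta_m})$, which is precisely the asserted bound.

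I expect the only substantive point---rather than a genuine technical obstacle---to be confirming that $\epsilon_{\eta_m}$ is well defined, i.e.\ that each $\boldsymbol{R}_b$ is finite so that the twin $\sigma_b$ actually exists as a $k$-separable state. This is already guaranteed by the earlier observation that the maximally mixed state lies in the interior of the $k$-separable ball of nonzero volume, so mixing in a finite amount of $I_d/d$ always pushes $\rho_b$ into $\mathcal{M}_k$. Granting this, the derivation is identical to the bipartite case, and the party count $m$ and the dimension $d$ never enter the inequalities, which is exactly why the statement is dimension- and party-independent.
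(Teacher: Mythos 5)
Your proposal is correct and follows exactly the route the paper intends: the paper's justification for Corollary~\ref{c2} is precisely that the argument of Theorem~\ref{l1} carries over verbatim once RRE and full separability are replaced by the random robustness of $n$-entanglement and $k$-separability with $k=m-n+1$, which is what you spell out. Your added remark on the finiteness of each $\boldsymbol{R}_b$ (via the $k$-separable ball around the maximally mixed state) is a sensible extra check that the paper also notes in its preliminaries.
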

The above corollary can be easily proved by following the same path of logic as in the proof of Theorem 1.

\textbf{Remark 1.} In deriving the upper bound we have taken the assistance of random robustness of $n$-entanglement. But without loss of generality, one can derive another bound on the success probability of discriminating states drawn from an ensemble, say $\eta_m=\{p_b,\rho_b\}$, considering general robustness of $n$-entanglement of the states, $\rho_b$, with respect to any $k$-separable state, $\sigma_k$, instead of random robustness. Following the exact steps as in Theorem 1, it is easy to see 
    \begin{equation}
       P_X(\eta_m)\leq P_{X}\left(\epsilon_{\eta_m}^{t}\right)\left[1+\mathbb{R'}^M_n(\eta_m)\right].\nonumber
    \end{equation}
    The right-hand side of the above inequality has exactly the same form as of \eqref{14}, with $\epsilon_{\eta_m}$ and $\mathbb{R}^M_n(\eta_m)$ being replaced by $\epsilon_{\eta_m}^{t}=\{p_b,\sigma_b^t\}$ and $\mathbb{R}'^M_n(\eta_m)=\max\limits_{\rho_{b}\in \eta_m}R_n(\rho_{b}||\sigma_k)$, respectively. Here $\sigma^t_b$ is the $k$-separable twin of $\rho_b$. 

%In general if the robustness corresponding case is denoted by $R_{k}(\rho||I_{d_{k}})\geq0$, $d_{k}$ being the total dimension of $k$  subsystem, s.t. $d_{m}=d$. The corresponding separable states ensemble can be written as $\epsilon_{k}^{s}:=\{p_{b},\rho_{k}^{s}\}$. One can, proceed with the same proof as in ~\ref{l1}. Hence, the general upper bound for $k$ number of parties could be read as $P_{O}\leq P_{k,O}^{s}[1+\max\limits_{\rho_{b}}R_{k}(\rho_{b}||I_{d_{k}})]$.
\textbf{Remark 2.}
The derived bound suggests that the ratio, 
$P_X(\eta_2)/P_X(\epsilon_{\eta_2})$ can be greater than 1. We will show numerically in the subsequent section that such scenarios do appear %when one restricts to global measurement and chooses 
by generating Haar-uniform ensembles of two two-qubit states. %However, when $X$ is local operation assisted by classical communication,(LOCC). 
Also, for instance, consider an ensemble consisting of two Bell states $\ket{\phi^{+}}$ and $\ket{\psi^{+}}$, with equal probabilities. These two states can be discriminated with certainty using global measurements or even using LOCC~\cite{jon}. Both the states have random robustness of 2. Therefore, the resultant separable ensemble formed by mixing the bell states with identity, has two states that are no longer orthogonal and hence, can not be perfectly discriminated by any operation. This is also an example of non-locality without entanglement~\cite{Ben,NLWE} wherein, a set of separable states is hard to discriminate than the corresponding set  of entangled ones, when any LOCC is allowed for discrimination.

\subsection{Lower bound}
\label{3b}
Now, we attempt to bound the success probability of QSD from below.% by providing a lower bound.
\begin{theorem}
\label{lemma1}
    The maximum probability of guessing the states chosen from an ensemble, $\eta_m=\{p_b,\rho_b\}$, of $m$-partite states using any particular measurement scheme, $X$, is lower bounded as
    \begin{equation}\label{eq6}
        P_X(\eta_m)\geq  P_X\left(\epsilon_{\eta_m}\right)\left(1+\mathbb{R}_n^m(\eta_m)\right)-\max_{b}(R_n(\rho_b||\frac{I_d}{d})p_{b}),
    \end{equation}
    where $P_X(\epsilon_{\eta_m})$ is the maximum probability of correctly discriminating states selected from $\epsilon_{\eta_m}=\{p_b,\sigma_b\}$, i.e., the closest $m-n+1$-separable ensemble to $\eta_m$. Here $R_n(\rho_b||\frac{I_d}{d})$ and $\mathbb{R}_n^m(\eta_m)$ are the random robustness of $n$-entanglement of $\rho_b$ and minimum random robustness of $n$-entanglement of $\eta_m$, respectively. 
\end{theorem}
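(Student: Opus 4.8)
The plan is to run the argument of Theorem~1 in reverse: rather than optimizing over POVMs directly, I would substitute one fixed, well-chosen measurement to produce a valid lower bound. First I would invert the defining relation of the closest separable state. Writing $\boldsymbol{R}_b=R_n(\rho_b||\frac{I_d}{d})$, the relation $\sigma_b=(\rho_b+\boldsymbol{R}_b I_d/d)/(1+\boldsymbol{R}_b)$ rearranges to
\begin{equation*}
\rho_b=(1+\boldsymbol{R}_b)\sigma_b-\boldsymbol{R}_b\frac{I_d}{d}.
\end{equation*}

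Next, since $P_X(\eta_m)$ is a maximum over the measurement class $\boldsymbol{X}$, I would lower bound it by evaluating the objective on a single feasible measurement. The natural choice is the POVM $\{M_b^{\star}\}$ in $\boldsymbol{X}$ that is optimal for the separable ensemble $\epsilon_{\eta_m}$, so that $\sum_b p_b\Tr[M_b^{\star}\sigma_b]=P_X(\epsilon_{\eta_m})$; this is a legitimate feasible measurement for $\eta_m$ too, because both ensembles act on the same $\mathcal{H}$ and the class $\boldsymbol{X}$ (global, LO, or LOCC) is fixed by the measurement structure alone. Substituting the inverted relation into $P_X(\eta_m)\geq\sum_b p_b\Tr[M_b^{\star}\rho_b]$ splits the bound into a separable term and a correction,
\begin{equation*}
P_X(\eta_m)\geq\sum_b p_b(1+\boldsymbol{R}_b)\Tr[M_b^{\star}\sigma_b]-\sum_b p_b\boldsymbol{R}_b\frac{1}{d}\Tr[M_b^{\star}].
\end{equation*}

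I would then estimate the two terms separately, taking care that they point in directions compatible with the minus sign. For the first term, since $\mathbb{R}_n^m(\eta_m)=\min_b\boldsymbol{R}_b$ and each $\Tr[M_b^{\star}\sigma_b]\geq 0$, replacing $1+\boldsymbol{R}_b$ by the smaller quantity $1+\mathbb{R}_n^m(\eta_m)$ only decreases the sum, giving $(1+\mathbb{R}_n^m(\eta_m))P_X(\epsilon_{\eta_m})$. For the subtracted term I would factor out the largest coefficient and invoke POVM completeness:
\begin{equation*}
\sum_b p_b\boldsymbol{R}_b\frac{1}{d}\Tr[M_b^{\star}]\leq\max_b(p_b\boldsymbol{R}_b)\cdot\frac{1}{d}\Tr\Big[\sum_b M_b^{\star}\Big]=\max_b(p_b\boldsymbol{R}_b),
\end{equation*}
using $\sum_b M_b^{\star}=I_d$ and $\Tr[I_d]=d$. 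Combining the two estimates reproduces exactly \eqref{eq6}.

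The step needing the most care is keeping the two bounds consistent with the subtraction: the first term is bounded \emph{from below} while the correction is bounded \emph{from above}, and each relies on a distinct positivity fact—nonnegativity of $\Tr[M_b^{\star}\sigma_b]$ in the first case and positivity of the POVM elements in the second. The only genuinely delicate point is that completeness $\sum_b M_b^{\star}=I_d$ holds for the chosen scheme $X$ (it does for global, LO, and LOCC measurements alike); this is what collapses the correction to the clean single-index maximum $\max_b(R_n(\rho_b||\frac{I_d}{d})p_b)$ rather than leaving a full sum, and it is why the bound is stated with a maximum.
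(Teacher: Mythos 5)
Your proposal is correct and follows essentially the same route as the paper: the same rearrangement $\rho_b=(1+\boldsymbol{R}_b)\sigma_b-\boldsymbol{R}_b I_d/d$, the same two-term estimate using $\min_b\boldsymbol{R}_b$ on the separable term and $\max_b(p_b\boldsymbol{R}_b)$ with POVM completeness on the correction. If anything, your explicit substitution of the measurement optimal for $\epsilon_{\eta_m}$ is a slightly cleaner way of phrasing the paper's ``maximize both sides over POVMs'' step.
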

\begin{proof}
    Let the random robustness of $n$-entanglement of the state, $\rho_b$, is $\boldsymbol{R}_b$. Let the closest $m-n+1$-separable state to $\rho_b$, when mixed with maximally mixed state, be$\sigma_b$. This implies, 
    \begin{equation}
        \rho_{b}+\boldsymbol{R}_bI_{d}/d=[1+\boldsymbol{R}_b]\sigma_{b}.\nonumber
    \end{equation}
    Multiplying both sides with $p_b$ and $M_b$ which is an element of a set of POVM outcomes, $\{M_b\}_b$, taking trace, and finally summing over $b$ on both sides we get
    \begin{eqnarray} \sum\limits_{b}p_{b}\Tr\left[M_{b}\left(\rho_{b}+\boldsymbol{R}_bI_{d}/d\right)\right]=
    \sum\limits_{b}p_{b}\Tr\left[\left(1+\boldsymbol{R}_b\right)M_{b}\sigma_{b}\right].\nonumber
    \end{eqnarray}
    This equation implies that
      \begin{eqnarray*}
       \left[\max\limits_{b}(\boldsymbol{R}_bp_{b})\right]\left(\Tr\left[\sum\limits_{b}M_{b}/d\right]\right)+\sum\limits_{b}p_{b}\Tr[M_{b}\rho_{b}]\\ \geq \left[\min\limits_{b}(1+\boldsymbol{R}_b)\right]\sum\limits_{b}p_{b}\Tr[M_{b}\sigma_{b}].        
\end{eqnarray*}
The above inequality is true for any POVM, $\{M_b\}$. Hence it remains valid if we maximize both sides over all POVMs, $\{M_b\}$, which belongs to the specific measurement scheme, $X$. After such maximization, we get
\begin{equation*}
 P_X(\eta_m)\geq (1+\min\limits_{b}\boldsymbol{R}_b)P_X(\epsilon_{\eta_m})-\max_{b}(\boldsymbol{R}_bp_{b}).
\end{equation*}
By putting $\boldsymbol{R}_b=R_n(\rho_b||\frac{I_d}{d})$ and $\min\limits_{b}\boldsymbol{R}_b=\mathbb{R}_n^m(\eta_m)$ in the above inequality we get
\begin{equation}
    P_X(\eta_m)\geq \left(1+\mathbb{R}_n^m(\eta_m)\right)P_X(\epsilon_{\eta_m})-\max_{b}(R_n(\rho_b||I_d)p_{b}). 
   \nonumber
\end{equation}
This completes the proof of Theorem 2.
\end{proof}

We would like to note here that arguments of Corollary~\eqref{c2} and Theorem~\ref{lemma1} remain valid for any measurement scheme, $X$, be it LO, LOCC, global measurements, and are also true for any number of parties involved in the game. 

\subsection{Example}
\label{3c}
There exist various examples where the lower bound provided above, is achievable. Let us discuss one such example. Consider a bipartite ensemble (i.e., $m=2$), $\Bar{\eta}_2^2$, of size 2. The states available in the ensemble are $\rho_{1}=\ket{\phi^{+}}\bra{\phi^{+}}$ and $\rho_{2}=\ket{\phi^{-}}\bra{\phi^{-}} $ with equal probabilities, i.e.,  $p_{1}=p_{2}=\frac{1}{2}$, where $\ket{\phi^{+}}$ and $\ket{\phi^{-}}$ are any two different Bell states. The dimension of the Hilbert space where these Bell states belong is $d=4$. Hence RRE of the Bell states are~\cite{Ref1} $R_1(\rho_{1}||\frac{I_4}{4})=R_1(\rho_{2}||\frac{I_4}{4})=2.$ 
Consider state discrimination using global measurements, i.e., $X=G$. Since the states, $\rho_1$  and $\rho_2$, are orthogonal they can be perfectly discriminated. Thus $P_G\left(\Bar{\eta}_2^2\right)=1$.
The bipartite separable states that are closest to, respectively, $\rho_{1}$ and $\rho_{2}$, with respect to maximally mixed state are
$\sigma_{1}=\frac{1}{3}\left(\ket{\phi^{+}}\bra{\phi^{+}}+\frac{I_4}{2}\right)$ and $\sigma_{2}=\frac{1}{3}\left(\ket{\phi^{-}}\bra{\phi^{-}}+\frac{I_4}{2}\right)$. The probability of optimally discriminating the states of the ensemble $\sigma_{\Bar{\eta}_2^2}=\{p_b,\sigma_b\}_{b=1}^2$ is given by the Helstrom bound~\cite{Hels}, i.e.,  
\begin{equation}
P_G\left(\sigma_{\Bar{\eta}_2^2}\right)=\frac{1}{2}+\frac{1}{2}||p_{1}\sigma_1-p_{2}\sigma_2||_{1}= 
    \frac{2}{3}. \nonumber
\end{equation}
 By putting $m=2$, $X=G$, $P_G\left(\sigma_{\Bar{\eta}_2^2}\right)=2/3$, $P_G\left(\Bar{\eta}_2^2\right)=1$, $R(\rho_{1}||I_{4})=R(\rho_{2}||I_{4})=2$, and $p_1=p_2=\frac{1}{2}$ in the left and right-hand side of the inequality~\eqref{eq6}, it is easy to check that, in this case, the lower bound is exactly achievable.
\begin{figure*}
     \centering
  \includegraphics[scale=0.65]{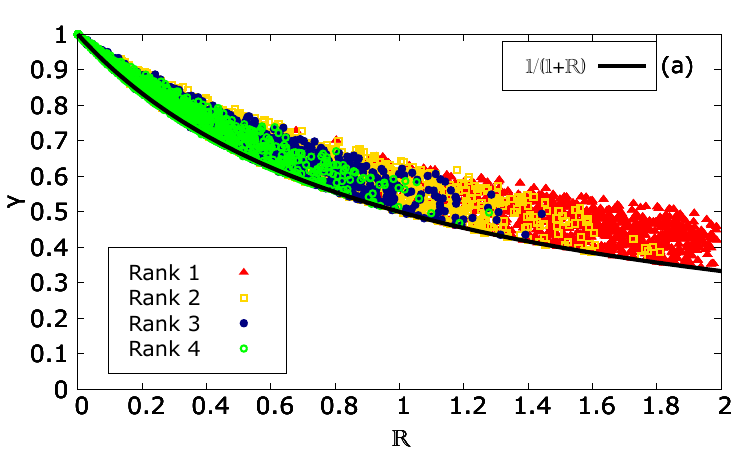}
    \hspace{1.2cm}
\includegraphics[scale=0.65]{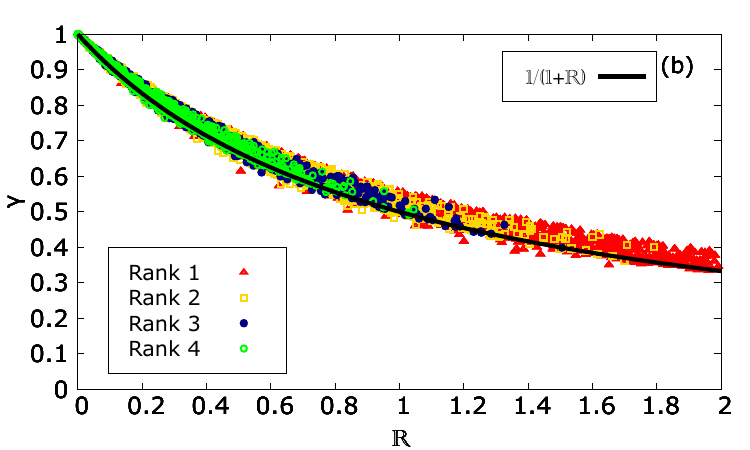}
     \caption{Analysis of the tightness of the upper bound on the probability of successful global discrimination of states. We plot $\gamma(i)$ along the vertical axis with respect to $\mathbb{R}(i)$ presented in the horizontal axis for %each of the 4000 ensembles, $\eta_2^2(i)$, 
     ensembles, $\eta^2_2$, consisting of Haar-uniformly generated states. In the left panel, the green, blue, yellow, and red points represent ensembles consisting of two rank-4, rank-3, rank-2, and rank-1 states, respectively. The green, blue, yellow, and red points of the right panel represent ensembles having one pure product state and one rank-4, rank-3, rank-2, and rank-1 state, respectively. There are 1000 points for each color in both of the panels, i.e., 1000 ensembles of each type, specified by the rank of the enclosed states. In both of the panels, the black line presents $\gamma=1/({1+\mathbb{R}})$ curve. All the axes are dimensionless. }
     \label{fig1}
 \end{figure*}
\section{The smallest non-trivial ensembles: Ensembles of size two} 
\label{4}
We now focus on ensembles having only two states and global QSD, i.e., $X=G$.
In the first part of this section, we consider two cases and find lower bounds on the probability of correctly distinguishing the pair of states in each of the cases. In the first case, we restrict the two states of the ensemble to have equal random robustness of $n$-entanglement and in the later one, we assume one state of the ensemble to be $k$-separable.
In the last part of the section, we scrutinize these bounds%, obtained till that point,
~on the optimal probability of correctly distinguishing states. 
\subsection{When the two states have equal random robustness of $n$-entanglement}
Let us discuss a game where Alice chooses between the states $\rho_1$ and $\rho_2$ with probability $p_1$ and $p_2$ respectively, from an ensemble of $m$-partite states, $\Bar{\eta}_m^2$. Here again superscript 2 implies the ensemble contains only two distinct states. Based on the nature of the random robustness of entanglement of $\rho_1$ and $\rho_2$, two crucial Theorems can be derived, which we discuss below.

\label{4a}
\begin{theorem}
    The optimal probability $\left(P_{G}\left(\Bar{\eta}_m^2\right)\right)$ of globally discriminating the states of an ensemble $\Bar{\eta}_m^2:=\{p_{b},\rho_{b}\}$, containing two $m$-partite states, $\rho_1$ and $\rho_2$, of equal random robustness of $n$-entanglement is always greater than the optimal probability $\left(P_{G}\left(\epsilon_{\Bar{\eta}_m^2}\right)\right)$ of global discrimination of the states of the ensemble $\epsilon_{\Bar{\eta}_m^2}:=\{p_{b},\sigma_{b}\}$ where $\sigma_b$ is the closest $m-n+1$-separable state to $\rho_b$.
    \label{l3}
\end{theorem}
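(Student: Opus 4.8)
The plan is to collapse both discrimination probabilities onto the Helstrom bound and then reduce the whole statement to a single comparison of trace norms. Because we are discriminating exactly two states with global measurements, both optimal probabilities are Helstrom values, $P_G\!\left(\bar{\eta}_m^2\right)=\tfrac12\left(1+\|p_1\rho_1-p_2\rho_2\|_1\right)$ and $P_G\!\left(\epsilon_{\bar{\eta}_m^2}\right)=\tfrac12\left(1+\|p_1\sigma_1-p_2\sigma_2\|_1\right)$, so the assertion is equivalent to $\|p_1\rho_1-p_2\rho_2\|_1\ge\|p_1\sigma_1-p_2\sigma_2\|_1$ (with strictness to be argued separately). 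The first thing I would record is the consequence of the equal-robustness hypothesis: writing $\boldsymbol{R}$ for the common random robustness, the definition of the closest separable state gives $\rho_b=(1+\boldsymbol{R})\sigma_b-\boldsymbol{R}\,I_d/d$, and hence
\begin{equation}
p_1\rho_1-p_2\rho_2=(1+\boldsymbol{R})\left(p_1\sigma_1-p_2\sigma_2\right)-\boldsymbol{R}(p_1-p_2)\frac{I_d}{d}.\nonumber
\end{equation}
This is the only place the equal-robustness assumption is used, and it is precisely what allows the identity contribution to be extracted with a single scalar coefficient $\boldsymbol{R}$ common to both states.

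Next I would bound $\|p_1\rho_1-p_2\rho_2\|_1$ from below using the reverse form of the triangle inequality (property~\ref{p3}) together with the homogeneity of the trace norm (property~\ref{p2}). Taking the two summands on the right as the operators $A$ and $B$ and using $\|I_d/d\|_1=1$, I obtain $\|p_1\rho_1-p_2\rho_2\|_1\ge(1+\boldsymbol{R})\|p_1\sigma_1-p_2\sigma_2\|_1-\boldsymbol{R}\,|p_1-p_2|$. The decisive step is then to control the subtracted term by the very quantity it is subtracted from: since $\sigma_1,\sigma_2$ are density matrices, $\Tr(p_1\sigma_1-p_2\sigma_2)=p_1-p_2$, and because the trace norm dominates the absolute value of the trace, $\|p_1\sigma_1-p_2\sigma_2\|_1\ge|p_1-p_2|$. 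Substituting this into the previous line collapses the coefficient to $(1+\boldsymbol{R})-\boldsymbol{R}=1$ and yields $\|p_1\rho_1-p_2\rho_2\|_1\ge\|p_1\sigma_1-p_2\sigma_2\|_1$, i.e. $P_G\!\left(\bar{\eta}_m^2\right)\ge P_G\!\left(\epsilon_{\bar{\eta}_m^2}\right)$.

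The hard part will be upgrading this inequality to the strict inequality stated in the theorem. The chain above is an equality only if both inequalities are simultaneously tight, which forces $p_1\sigma_1-p_2\sigma_2$ (equivalently $p_1\rho_1-p_2\rho_2$) to be a semidefinite operator. For balanced priors $p_1=p_2=\tfrac12$ the situation is transparent and clean: the identity term vanishes outright, leaving $\|p_1\rho_1-p_2\rho_2\|_1=(1+\boldsymbol{R})\|p_1\sigma_1-p_2\sigma_2\|_1$, which is strictly larger because $\boldsymbol{R}>0$ for entangled states and $\rho_1\ne\rho_2$ makes the norm positive. I would therefore present the balanced-prior case as the clean proof of the strict claim.

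For general priors the strictness is genuinely delicate, and this is the step I expect to be the real obstacle: one must exclude the degenerate configuration in which one weighted state dominates the other in operator order, so that $p_1\rho_1-p_2\rho_2$ becomes semidefinite and both Helstrom values coincide. I would address this either by imposing the mild condition (e.g. comparable priors) that rules out such dominance, or by carefully analysing the joint equality conditions of property~\ref{p3} and of the trace-versus-trace-norm bound and showing they cannot hold together under the working hypotheses. The rest of the argument is robust and dimension- and party-independent, since it uses only the Helstrom formula and the three generic trace-norm properties already recorded.
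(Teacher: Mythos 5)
Your proposal is correct and follows essentially the same route as the paper: both reduce the claim to the Helstrom bound, exploit the common robustness $\boldsymbol{R}$ to relate $p_1\rho_1-p_2\rho_2$ and $p_1\sigma_1-p_2\sigma_2$ up to a multiple of $I_d/d$, and finish with the triangle inequality together with the observation that $\|p_1\sigma_1-p_2\sigma_2\|_1\geq|p_1-p_2|$ (the paper applies the equivalent lower bound to the $\rho$'s via the reverse triangle inequality rather than via the trace, a cosmetic difference). Your concern about strictness is well placed but moot as a point of comparison: the paper's own proof likewise establishes only the non-strict inequality $P_{G}\left(\bar{\eta}_m^2\right)-P_{G}\left(\epsilon_{\bar{\eta}_m^2}\right)\geq 0$, despite the theorem's wording ``always greater than''.
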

\begin{proof}
The optimal probability of successfully discriminating two states, $\rho_1$ and $\rho_2$, chosen with probability $p_1$ and $p_2$ using global operations can be found using the Helstrom bound and is given by
 \begin{equation}
     P_{G}\left(\Bar{\eta}_m^2\right)=\frac{1}{2}+\frac{1}{2}||p_{1}\rho_{1}-p_{2}\rho_{2}||_{1}.\label{eq7}
 \end{equation}
%If the general robustness of the two states in the ensemble are $R_{1}$ and $R_{2}$ respectively, the corresponding ensemble consisting of separable states  can be deduced as $\eta_{s}:=\{p_{b},\rho_{b}^{s}\}$  where, $\rho_{b}^{s}=\frac{\rho_{b}+\frac{R_{b}I_{d}}{d}}{1+R_{b}}$, $d$ is the dimension of the states. 
Let the random robustness of $n$-entanglement of both of the states be $\boldsymbol{R}$. The $m-n+1$-separable state, $\sigma_b$, closest to $\rho_b$ can be written as
\begin{equation}
\sigma_b=\frac{d\rho_b+\boldsymbol{R}I_d}{d(1+\boldsymbol{R})},\label{eq9}
\end{equation}
where $d$ is the dimension of the Hilbert space on which the state, $\rho_b$, acts and $I_d$ is the identity operator acting on the same Hilbert space.
We can again use the Helstorm bound to find the maximum probability of discriminating states chosen from $\epsilon_{\Bar{\eta}_m^2}$ by global measurement which is given by
\begin{equation} P_{G}({\epsilon}_{\Bar{\eta}_m^2})=\frac{1}{2}+\frac{1}{2}||p_{1}\sigma_{1}-p_{2}\sigma_{2}||_{1}.\label{eq8}
 \end{equation}
 By subtracting Eq.~\eqref{eq8} from Eq.~\eqref{eq7} we get
 \begin{eqnarray}
     P_{G}\left(\Bar{\eta}_m^2\right)-P_{G}\left(\epsilon_{{\Bar{\eta}_m^2}}\right)=\frac{1}{2}(||p_{1}\rho_{1}-p_{2}\rho_{2}||_{1}\nonumber\\-||p_{1}\sigma_{1}-p_{2}\sigma_{2}||_{1}).\label{eq10}
 \end{eqnarray}
Substituting the expression of $\sigma_b$ given to Eq.~\eqref{eq9} in Eq.~\eqref{eq10} and using the mentioned properties of trace norm (see properties~\ref{p1}, \ref{p2}, and \ref{p3}) we get
% \begin{itemize}
%     \item  $||p_{1}\rho_{1}-p_{2}\rho_{2}||_{1}\geq |p_{1}||\rho_{1}||_{1}-p_{2}||\rho_{2}||_{1}|$
%     \item $||p_{1}\rho_{1}-p_{2}\rho_{2}+\frac{I_{d}}{d}R(\rho_{1}-\rho_{2}||_{1}\leq \\
%     ||p_{1}\rho_{1}-p_{2}\rho_{2}||_{1}+||\frac{I_{d}}{d}R(\rho_{1}-\rho_{2}||_{1}$
% \end{itemize}
% Using the second property one can show that.
%\textcolor{orange}{[besi step jump hoye geche?]}
 \begin{eqnarray}
      P_{G}\left(\Bar{\eta}_m^2\right)-P_{G}\left(\epsilon_{\Bar{\eta}_m^2}\right)\geq \frac{\frac{1}{2}\boldsymbol{R}}{1+\boldsymbol{R}}(||p_{1}\rho_{1}-p_{2}\rho_{2}||_{1}\nonumber\\-|p_{1}-p_{2}|~||I_d/d||_{1}).\label{eq11}
 \end{eqnarray}
Without loss of generality, we can consider $p_1\geq p_2$, i.e., $|p_1-p_2|=p_1-p_2$. Moreover, we know $||I_{d}/d||_{1}=1$. Therefore inequality~\eqref{eq11} reduces to
\begin{eqnarray}
    P_{G}\left(\Bar{\eta}_m^2\right)-P_{G}\left({\epsilon}_{{\Bar{\eta}_m^2}}\right)\geq \frac{\frac{1}{2}\boldsymbol{R}}{1+\boldsymbol{R}}(||p_{1}\rho_{1}-p_{2}\rho_{2}||_{1}-p_{1}+p_{2}).\nonumber
\end{eqnarray}
Again by using the same basic properties of trace norm and the fact that trace norms of normalized density matrices are always equal to unity, we have
\begin{equation}
P_{G}\left(\Bar{\eta}_m^2\right)-P_{G}\left({\epsilon}_{{\Bar{\eta}_m^2}}\right)\geq 0.\nonumber
\label{lb}
\end{equation}
\end{proof}
The above proof does not depend on the dimensions of the states. Since no compact form of the success probability in local state distinguishability of an arbitrary pair of states is available, the above Theorem can not be generalized to local state discrimination.

It is interesting to notice from the above Theorem that corresponding to every pair of equally entangled states, equal in the sense of RRE, there exists a pair of separable states which has lesser global distinguishability compared to the entangled-state pair, We call this phenomenon as " More indistinguihability with less entanglement". We will see in the next subsection that such an order in distinguishability disappears if we move our attention to a pair of states having unequal RRE.
\subsection{When one of the states is separable}
One can notice that Theorem~\ref{l3} is applicable to ensembles that consist of two states of equal RRE. Let us now move to the second case, where we again consider ensembles of two states, but one of the states is restricted to being $k$-separable and the other may have any RRE. 
 
\begin{theorem}
 If an ensemble, $\widetilde{\eta}_m^2:=\{p_{b},\rho_{b}\}$, consisting of two $m$-partite states, $\rho_1$ and $\rho_2$, one of which is $k$-separable, is considered for state discrimination, then the difference between optimal probability $\left(P_{G}\left(\widetilde{\eta}_m^2\right)\right)$ of discriminating the states with global measurements and the optimal probability $\left(P_{G}\left({\epsilon}_{\widetilde{\eta}_m^2}\right)\right)$ of globally discriminating the states chosen from the closest ensemble $\epsilon_{\widetilde{\eta}_m^2}:=\{p_{b},\sigma_{b}\}$ of $k$-separable states to $\widetilde{\eta}_m^2$ is lower bounded by $\frac{\frac{1}{2}\boldsymbol{R}_1}{1+\boldsymbol{R}_1}(||p_{1}\rho_1-p_{2}\rho_2||-1)$.
\label{lemma4}
\end{theorem}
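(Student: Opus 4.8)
The plan is to mirror the structure of the proof of Theorem~\ref{l3}, exploiting the key simplification that one of the two states is already $k$-separable. First I would designate $\rho_2$ as the $k$-separable state — consistent with the random robustness $\boldsymbol{R}_1$ of $\rho_1$ appearing in the claimed bound. Then the random robustness of $n$-entanglement of $\rho_2$ vanishes, so its closest separable state is itself, $\sigma_2=\rho_2$, and only $\rho_1$ is genuinely perturbed, with closest $m-n+1$-separable state $\sigma_1=\frac{\rho_1+\boldsymbol{R}_1 I_d/d}{1+\boldsymbol{R}_1}$.

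Next I would express both optimal global success probabilities through the Helstrom bound, $P_G(\widetilde{\eta}_m^2)=\tfrac12+\tfrac12\|p_1\rho_1-p_2\rho_2\|_1$ and $P_G(\epsilon_{\widetilde{\eta}_m^2})=\tfrac12+\tfrac12\|p_1\sigma_1-p_2\rho_2\|_1$, and subtract them so that the $\tfrac12$ constants cancel, leaving a difference of two trace norms. Substituting the expression for $\sigma_1$ into $p_1\sigma_1-p_2\rho_2$ and clearing the denominator $1+\boldsymbol{R}_1$, I would rearrange the numerator into the combination $(p_1\rho_1-p_2\rho_2)+\boldsymbol{R}_1(p_1 I_d/d-p_2\rho_2)$. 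Applying the triangle inequality (property~\ref{p3}) and homogeneity (property~\ref{p2}) to split off the $\boldsymbol{R}_1$ term bounds $\|p_1\sigma_1-p_2\rho_2\|_1$ from above, which upon insertion into the difference yields
\begin{equation*}
P_G(\widetilde{\eta}_m^2)-P_G(\epsilon_{\widetilde{\eta}_m^2})\geq \frac{\tfrac12\boldsymbol{R}_1}{1+\boldsymbol{R}_1}\left(\|p_1\rho_1-p_2\rho_2\|_1-\|p_1 I_d/d-p_2\rho_2\|_1\right).
\end{equation*}

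The final step is to control the unwanted norm $\|p_1 I_d/d-p_2\rho_2\|_1$. Here I would again invoke the triangle inequality together with $\|I_d/d\|_1=\|\rho_2\|_1=1$ and $p_1+p_2=1$ to obtain $\|p_1 I_d/d-p_2\rho_2\|_1\leq 1$. Since the prefactor $\tfrac12\boldsymbol{R}_1/(1+\boldsymbol{R}_1)$ is nonnegative, replacing this norm by its upper bound $1$ only decreases the right-hand side, producing exactly the claimed lower bound $\frac{\tfrac12\boldsymbol{R}_1}{1+\boldsymbol{R}_1}(\|p_1\rho_1-p_2\rho_2\|_1-1)$.

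I do not anticipate a serious analytic obstacle, since every estimate is an elementary triangle-inequality step; the only points requiring care are the bookkeeping of signs, so that each relaxation moves the bound in the correct direction, and the recognition of the structural asymmetry with Theorem~\ref{l3}. Unlike the equal-RRE case, only one state is mixed with noise, so the leftover term is $p_2\rho_2$ rather than a multiple of $I_d/d$; consequently the bound cannot be combined into a manifestly nonnegative quantity, and indeed $\|p_1\rho_1-p_2\rho_2\|_1-1$ may be negative. This is precisely what signals that ``More indistinguishability with less entanglement'' need not hold when one state is separable, in agreement with the discussion anticipated in the surrounding text.
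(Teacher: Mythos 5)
Your proposal is correct and follows essentially the same route as the paper: Helstrom bound for both ensembles, substitution of $\sigma_1=\frac{\rho_1+\boldsymbol{R}_1 I_d/d}{1+\boldsymbol{R}_1}$ with $\sigma_2=\rho_2$, a triangle-inequality split isolating the $\boldsymbol{R}_1$ term, and the final estimate $\|p_1 I_d/d-p_2\rho_2\|_1\leq p_1+p_2=1$. The paper compresses these steps into ``one can show that,'' and your write-up simply supplies the intermediate details, with all relaxations going in the correct direction.
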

\begin{proof}
    Without loss of generality we can consider the state, $\rho_2$, as separable. Let the random robustness of $n$-entanglement of $\rho_1$ is $\boldsymbol{R}_1$. % and the $m-n+1$-separable state closest to $\rho_1$ is $\sigma_1$. 
    %Considering, $R_{1}=R$ and $R_{2}=0$.
    Following the same way of calculations as done in the proof of Theorem~\ref{l3}, we find, in this case,   
    \begin{eqnarray}
        P_{G}\left(\widetilde{\eta}_m^2\right)-P_{G}\left({\epsilon}_{\widetilde{\eta}_m^2}\right)=\frac{1}{2}||p_{1}\rho_{1}-p_{2}\rho_{2}||_{1}\nonumber\\-\frac{1}{2}\left|\left|\frac{p_{1}}{1+\boldsymbol{R}_1}\left(\rho_{1}+\boldsymbol{R}_1I_{d}/d\right)-p_{2}\rho_{2}\right|\right|_{1}. 
        \label{R0}
    \end{eqnarray}
    Using the properties of trace-norm mentioned in \ref{p1}, \ref{p2}, and \ref{p3},
    %\begin{itemize}
    %    \item  $||p_{1}\rho_{1}-p_{2}\rho_{2}||_{1}\geq |p_{1}||\rho_{1}||_{1}-p_{2}||\rho_{2}||_{1}|$
    %    \item $||\frac{p_{1}}{1+R}(\rho_{1}+R\frac{I_{d}}{d})-p_{2}\rho_{2}||_{1}\leq \frac{p_{1}}{1+R}(||\rho_{1}||_{1}+\frac{R}{d}||I_{d}||_{1})+p_{2}||\rho_{2}||_{1}$
    %\end{itemize}
    and the relations, $||I_{d}||_{1}=d$ and $||\rho_{1}||_{1}=||\rho_{2}||_{1}=1$, one can show that,
    \begin{equation}
         P_{G}\left(\widetilde{\eta}_m^2\right)-P_{G}\left({\epsilon}_{\widetilde{\eta}_m^2}\right)\geq\frac{\frac{1}{2}\boldsymbol{R}_1}{1+\boldsymbol{R}_1}(||p_{1}\rho_1-p_{2}\rho_2||-1).\nonumber
    \end{equation}
\end{proof}

The above proof gives a hint that if one state of an ensemble of size two is separable, then $P_{G}(\widetilde{\eta}_m^2)$ may be less than $P_{G}({\epsilon}_{\widetilde{\eta}_m^2})$. In the next subsection, we will see that there do exist such ensembles, $\widetilde{\eta}_m^2$, for which $P_{G}(\widetilde{\eta}_m^2)<P_{G}({\epsilon}_{\widetilde{\eta}_m^2})$.  
\subsection{Analysis on the tightness of the upper bound}
\label{4b}
In this section, we numerically examine the tightness of the upper bound provided in Theorem \ref{l1}. We consider a global measurement scheme and examine ensembles consisting of a pair of two-qubit states. The ensembles are randomly generated.
The idea is to generate a set of $N$ number of ensembles, $\eta_2^2(i):=\{\{p^i_{b},\rho^i_{b}\}_b\}_i$, each of which consists of two Haar-uniformly selected two-qubit states, $\rho_1^i$ and $\rho_2^i$. %The elements of the generated state follows, normal distribution with mean zero and standard deviation one. Such, that each of the state has finite number of entanglement content. 
The probability $p^i_{1}$ is also chosen randomly from (0,1), which automatically specifies the probability, $p^i_2$ as $p^i_2=1-p^i_1$. To estimate the RRE, $R_1(\rho_b^i||\frac{I_4}{4})$, of each of the bipartite states, $\rho_b^i$, of every ensemble, $\eta_2^2(i)$, we have used the numerical non-linear optimization algorithm, NLopt~\cite{ppt}. In particular, through the algorithm, we find for each state, $\rho_b^i$, the minimum value of $\boldsymbol{R}_b^i$, such that the state $\frac{4\rho^i_{b}+\boldsymbol{R}_b^iI_{4}}{4(1+\boldsymbol{R}_b^i)}$ remains positive under partial transposition, which in case of these $2\otimes 2$ states is a necessary and sufficient condition for separability. Here $I_4$ is the identity operator which acts on 4-dimensional Hilbert space. Once we find $\boldsymbol{R}_b^i$ for all $i$ and $b$, we can easily determine maximum RRE, $\mathbb{R}^M_1\left(\eta^2_2(i)\right)$, for all $i$.
The steps, followed in the optimization are 
\begin{enumerate}
    \item Corresponding to each $\rho_{b}$ in the ensemble, a state of the form $\rho_{b}'=\frac{\rho_{b}+\frac{S_{b}I_{4}}{4}}{1+S_{b}}$ is written, The goal is to find the minimum value of $S_{b}$ such that $\rho_{b}'$ is separable.
    \item The partially transposed state corresponding, to each $\rho_{b}'$ is obtained as $\rho_{b}^{T}= \rho_{b}^{T_{(1/2)}}$. Here, $T_{(1/2)}$ represents partial transpose over first or second party. For, $\rho_{b}'$ to be separable the eigenvalues of $\rho_{b}^{T}$, must all be positive.
    \item Next, with the constraint that minimum of the 4 eigenvalues of $\rho_{b}'$ is positive, $S_{b}$ is minimized using the ISRES algorithm, for each index b. Thus, we obtained $S_{b}^{min}$.
    \item As per the definition of robustness, $S_{b}^{min}$ so obtained is nothing but the random robustness of the states $\rho_{b}$ . I.e. $R(\rho_{b}||\frac{I_{4}}{4})=S_{b}^{min}$'.
    \item Now, the robustness of each state is known the next, step is to identify the maximum out of them. This gives, $\mathbb{R}^M_1=\max\limits_{b}\{R(\rho_{b}||\frac{I_{4}}{4})\}$
 \end{enumerate}
Corresponding to each of these randomly generated ensembles, $\eta_2^2(i)$, we can define another ensemble, $\epsilon_{\eta_2^2}(i)=\{p_b^i,\sigma_b^i\}$, where $\sigma_b^i$ is the closest fully separable state to $\rho_b^i$. For every ensemble, $\eta_2^2(i)$ $\left(\epsilon_{\eta_2^2}(i)\right)$, we can find the corresponding probability, $P_G\left(\eta_2^2(i)\right)$ $\left(P_G\left(\epsilon_{\eta_2^2}(i)\right)\right)$, of successfully discriminating the two states, $\rho_1^i$ ($\sigma_1^i$) and $\rho_2^i$ ($\sigma_2^i$), contained in the ensemble employing global operations by making use of the Heltstorm bound. For notational simplicity, we represent $\mathbb{R}^M_1\left(\eta^2_2(i)\right)$, $P_G\left(\eta_2^2(i)\right)$, and $P_G\left(\epsilon_{\eta_2^2}(i)\right)$ by $\mathbb{R}(i)$, $P_\eta^i$, and $P_\epsilon^i$, respectively.

To investigate the bound we define the following quantity:
\begin{equation}
      \gamma(i)= \frac{P_{\eta}^i}{P_\epsilon^i(1+\mathbb{R}(i))}. \nonumber
\end{equation}
The upper bound stated in Theorem~\ref{l1} confirms that $\gamma(i)$ is always less than or equal to 1 for all $i$, where the equality will only be held when the derived bound is achieved. %\textcolor{red}{The more the guessing probability, $P_\eta^i$, deviates from the bound, the smaller becomes $\gamma(i)$.}

%Using  the numerically derived robustness one can find the separable states and the corresponding ensemble $\eta^{s}$. Since, there are only two states in the ensemble the  optimal probability of discrimination is given by the Helstrom bound~\cite{Hels}. Both, $P_{G} \And P_{G}^{s}$ can be calculated using the bound. In the following, section we give a detailed description of the numerical result so obtained for the two qubit ensemble.
 Let us now finally present our numerical findings. Since for Haar-uniform generation of states, we need to specify the rank of the states, we have generated four sets of 1000 ensembles. Each of these sets contains either ensembles of rank 1, rank 2, rank 3, or rank 4 states. So a total of 4000 ensembles are generated. As we mentioned earlier, each of these ensembles consists of two states. In the left panel of Fig.~\ref{fig1}, we present a scattered plot of $\gamma(i)$ with respect to $\mathbb{R}(i)$ for each of the 4000 ensembles. The green, blue, yellow, and red points represent ensembles having rank-1, rank-2, rank-3, and rank-4 states, respectively.

 %We denote  ensembles of pure states as $\eta_{1}$ and the ensembles of mixed states as  $\eta_{2},\eta_{3}$ and $\eta_{4}$. The subscript in the designation of the ensembles tells about the rank of the states it contains. 
%Each of the scattered points in Fig.\ref{fig1}(a) represents an ensemble, with  particular values of $r$ and $R_{M}$. 
We would like to draw the attention to the following features that can be clearly noticed in the plot:
\begin{itemize}
    \item For small values of $\mathbb{R}$, $\gamma$ is very close to unity, which implies, in such cases, the upper bound is almost achievable. As $\mathbb{R}$ increases, $\gamma$ gets more deviated from unity. This observation suggests that the derived upper bound is tighter for ensembles of weakly entangled states.
    \item In the the same figure, we also plot the curve $1/{(1+\mathbb{R})}$ using a black line. It can be seen from the figure that all of the $(\mathbb{R}(i),\gamma(i))$ points are scattered on or above the black line, stipulating the fact that for each of the generated ensembles, $\eta_2^2(i)$, $P_\eta^i\geq P_\epsilon^i$. Thus we realize that though Theorem~\ref{l3} was derived considering a situation where both of the bipartite states of an ensemble of size two have the same random robustness of entanglement, it also holds true for ensembles consisting of pair of Haar-uniformly generated two-qubit states which of course will always have unequal and non-zero RREs. %We will show in a moment that if we restrict one state of the ensemble, $\eta^2_2(i)$, to be pure product, then the ensemble, $\eta^2_2(i)$, consisting of a Haar-uniformly generated entangled state and the product state may violate the relation, $P_\eta^i\geq P_\epsilon^i$.
    
   % This result is similar to that obtained in lemma~\ref{l3}, with $d=4$, only difference is here the robustness of entanglement of both the states are not necessarily equal.
\end{itemize}
 
 From the left panel of Fig.~\ref{fig1}, one may conjecture that Theorem~\ref{l3} is true for any ensemble of size two at least for the two-qubit system. To check for the plausibility of this for more general cases, we have generated four more sets of 1000 ensembles where each of them consists of a Haar-uniformly chosen state of fixed rank, say 1, 2, 3, or 4 and one pure product state of the form $\ket{\psi}^i_b\ket{\psi}_b^i$, where $\ket{\psi}_b^i$ is again chosen from Haar-uniform distribution. Hence, in the four generated sets of ensembles, each set consists ensembles of one either rank-1, 2, 3, or 4 state and one product state. Following the same numerical procedure, we again calculate $\gamma(i)$ and $\mathbb{R}(i)$ for all of these 4000 ensembles and plot $\gamma(i)$ with respect to $\mathbb{R}(i)$ in the right panel of Fig.~\ref{fig1}. The following features can be inferred from the plot:
  \begin{itemize}
  \item Unlike the left panel of Fig.~\ref{fig1}, in this plot finite number of points can be seen to be scattered below $1/(1+\mathbb{R})$ curve which has been depicted using a black line in the right panel of Fig.~\ref{fig1}. This points out that if one of the states of some $k^{\text{th}}$ ensemble, say $\eta_2^2(k)$, is separable then there are finite possibilities that the probability, $P_\eta^k$, of discriminating the states of the ensemble using global measurements will be less than $P_{\epsilon}^{k}$, i.e., the probability of correctly discriminating the states of the closest fully separable ensemble, $\epsilon_{\eta_2^2}(k)$, to $\eta_2^2(k)$. 
  \item For ensembles consisting of one pure product state and one entangled state, the upper bound derived in Theorem~\ref{l1} is tighter in the regions where $\mathbb{R}(i)$ is small, and it gets loosened for higher $\mathbb{R}(i)$.
 \end{itemize}

  %Fig.\ref{fig1}(b) graphically represents such types of ensembles being scattered in $r-R_{M}$ plane. 
%Thus, we have derived in lemma~\ref{l1} that whenever $R_{1}=R_{2}$ the optimal guessing probabilities obeys the  inequality relation ~\ref{lb}. The inequality also holds for ensembles created uniformly but is violated for ensembles where  one of the state is  separable. i.e, minimnum of the robustness ($R_{m}$=0). 

From both the plots of Fig.~\ref{fig1}, we observe that the relation provided in Theorem~\ref{l3} is not only true for ensembles of size two for which the RRE is the same for both the states of the ensemble, but also holds if we generate the two states of the ensemble Haar-uniformly keeping its rank fixed and not restricting entanglement of any of the states in case of two-qubit system. But the Theorem may get violated if we constrain one of the states of the ensemble to be separable, that is when the minimum RRE of the ensemble is kept fixed at zero. This triggers the question that does the inequality, given in Theorem~\ref{l3}, holds for all ensembles, $\eta_2^2$, for which $\mathbb{R}_1^{m}(\eta_2^2)$ is non-zero. If not, what is the cut-off value, $r_c$, such that all ensembles, $\eta_2^2$, which has minimum RRE $\mathbb{R}^m_1(\eta_2^2)\geq r_c$ satisfy the inequality provided in Theorem~\ref{l3}? We try to answer these questions numerically. In particular, we minimize $P_{G}(\eta_2^2)-P_{G}(\epsilon_{\eta_2^2})$, over all two-qubit ensembles, $\eta_2^2:=\{p_b,\rho_b\}$, keeping minimum RRE of $\eta_2^2$ fixed, say at $r$, and denote this optimum value as $\delta P (r)$. During this optimization we have used the exact expression for random robustness which for two-qubit pure state, $(|\Psi\rangle=x_{0}\ket{00}+x_{1}\ket{01}+x_{2}\ket{10}+x_{3}\ket{11})$ is given as,
$R(\ket{\Psi}||\frac{I_4}{4})=4\sqrt{(x_{1}x_{2}-x_{0}x_{3})-{(x_{1}x_{2}-x_{0}x_{3})}^\ast}$. Here $\epsilon_{\eta_2^2}$ is the closest ensemble to $\eta_2^2$ having fully separable states. In short, we define
\begin{eqnarray}
    &&\delta P(r)= \min\limits_{\eta_2^2} \left[P_{G}\left(\eta_2^2\right)- P_{G}\left(\epsilon_{\eta_2^2}\right)\right],\nonumber\\ &&\text{ such that } \mathbb{R}_1^m\left(\eta_2^2\right)=r.\nonumber
\end{eqnarray}
To reduce numerical complexity, we have optimized over ensembles of pure states only.
%We denote this minimum value of 
%The idea is to minimise the quantity $\delta P=P_{G}-P_{G}^{s}$ for every fixed value of the minimum robustness $(R_{1}=R_{m})$ in the ensemble. We call this quantity $\delta P_{min}$. Optimisation is done over all the state parameters corresponding to $\rho_{1}$ and $\rho_{2}$ that are considered to be pure, the maximum robustness,of the $(R_{2}=R_{M})$ and the probability $p_{1}$
%i.e.
%\begin{equation}
%    \delta P_{min}=\min\limits_{\{\rho_{1},\rho_{2},
%    R_{M},p_{1}\}} \delta P
%\end{equation}
%\textcolor{magenta}{Pure state is chosen to reduce the free parameter size $(N)$ which scales with the rank $(R)$ as $4dR+1$. For our case $d=4$, $R=1$, hence the optimisation is done over 17 state parameters.}
\begin{figure}
 \centering
% \scalebox{.7}{\input{pfin.pdf}}
  \includegraphics[scale=0.65]{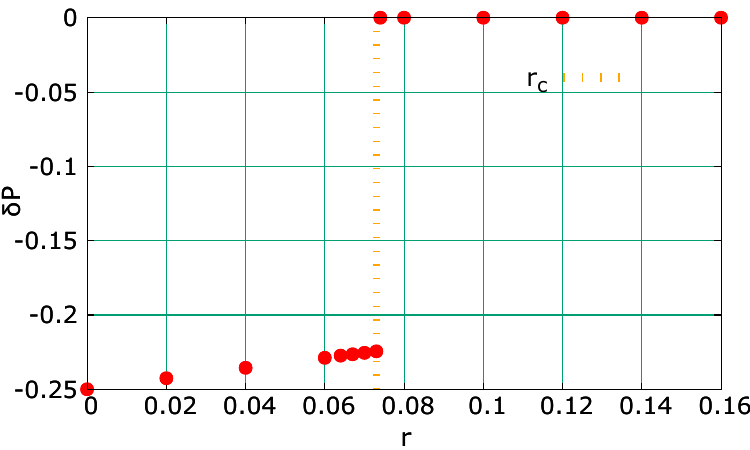}
     \caption{ Behavior of the difference between the probability of globally discriminating states of an ensemble and the states of the closest ensemble of separable states to the former. The plot shows the nature of $\delta P (r)$ (presented along the vertical axis) against fixed minimum RRE, $r$, (presented along the horizontal axis) with red points. The orange dashed line parallel to the vertical axis represents $r=r_c=0.073$ line indicating the value of $r$ above which  $\delta P (r)$ becomes zero. Both the horizontal and vertical axes are dimensionless.}
     \label{fig2}
 \end{figure}
 In Fig.~\ref{fig2}, we plot $\delta P(r)$ for different values of $r$. As it is visible from the plot, for small values of $r$, $\delta P$ is negative which implies that ensembles having small minimum RRE can violate the inequality of the form given in Theorem~\ref{l3}. One can notice further from the figure that for $r\geq 0.073$, $\delta P (r)$ becomes equal to zero up to numerical accuracy. Thus we find $r_c=0.073$. This suffices to conclude that the optimal probability of globally distinguishing the states of any ensembles, $\eta_2^2$, consisting of two two-qubit pure states and having minimum RRE greater than $r_{c}$, is greater or equal to the maximum probability of globally discriminating the states of $\epsilon_{\eta_2^2}$, i.e., the closest ensemble to $\eta_2^2$ containing fully separable states. The figure also indicates that for every $r>r_{c}$ there exists an ensemble, $\Tilde{\eta}_{2,r}^2$, having minimum RRE equal to $r$, such that $P_{G}\left(\Tilde{\eta}_{2,r}^2\right)= P_{G}\left(\epsilon_{\Tilde{\eta}_{2,r}^2}\right)$. This indicates that there exist a threshold value of $r=r_{c}$ above which the phenomenon of ``More indistinguishability with less entanglement" gets activated for two-element ensembles of pure states.
 
\section{Conclusion And Discussion}
\label{s5}
In quantum state discrimination tasks, it is often hard to find the actual probability of success in distinguishing. In this article, considering minimum-error state discrimination, we have provided upper and lower bounds on the optimal probability of successfully guessing the state in terms of the random robustness of entanglement of the states participating in the quantum state discrimination. The bound derived is universal in the sense that it sustains for any measurement scheme, arbitrary dimensions of the systems corresponding to the states to be distinguished, and any number of parties and states involved.

Furthermore, we found that the probability of globally discriminating a pair of multipartite states having equal entanglement is greater or equal to the probability of discriminating the closest separable states to the entangled states. We referred the phenomenon as ``More indistinguishability with less entanglement". %This establishes the notion of non-locality without entanglement. 
If we move from the restriction of both states having the same entanglement, and constrain one of the states to be product, then the probability of discriminating the two states may not always exhibit the phenomenon, and for such situations we provide an independent lower bound on the probability of discriminating successfully. Furthermore, we numerically explore how tight the bounds are by examining the global discrimination probability of states selected from Haar-uniformly generated ensembles of two two-qubit states. 
Finally, %using Haar-uniform generation of two-qubit two-element ensembles, 
we found that for two-element ensembles of unequal entanglements, the minimum of the two entanglements must possess a threshold value for the ensemble to exhibit the phenomenon of ``More indistinguishability with less entanglement"
%we have numerically found the minimum random robustness of a two-qubit ensemble of size two, such that the probability of distinguishing the states of the ensemble will be greater or equal to the distinguishability of the closest ensemble of separable states.
\subsection{Acknowledgement}
The research of KS was supported in part by the INFOSYS scholarship. C. S. acknowledges funding from the QuantERA II Programme that has received funding from the European Union’s Horizon 2020 research and innovation programme under Grant Agreement No 101017733 and the F.R.S-FNRS Pint-Multi programme under Grant Agreement R.8014.21,
from the European Union’s Horizon Europe research and innovation programme under the project ``Quantum Security Networks Partnership" (QSNP, grant agreement No 101114043), from the F.R.S-FNRS through the PDR T.0171.22, from the FWO and F.R.S.-FNRS under the Excellence of Science (EOS) programme project 40007526, from the FWO through the BeQuNet SBO project S008323N. Funded by the European Union. Views and opinions expressed are however those of the authors only and do not necessarily reflect those of the European Union, who cannot be held responsible for them.
We acknowledge partial support from the Department of Science and Technology, Government of India through the QuEST grant (grant number 
DST/ICPS/QUST/Theme3/2019/120).
\bibliography{probref}
\end{document}